 \def\proofbox{\hbox{\vrule height 1.4ex depth 0.2ex width 0.4em}}\def\qed{}
\def\proofof#1{\par\noindent{\bfseries\itshape Proof of #1. }\def\qed{\hfill\proofbox\gdef\qed{}}}
\def\tclaim{\begin{description}\vspace{-2pt}\refstepcounter{theorem}\item[(\thetheorem)]}
\def\endtclaim{\vspace{-2pt}\end{description}}
\tikzstyle{every picture} = [>=latex]
\newcommand{\MSOi}{\ensuremath{\mathrm{MSO}_1}\xspace}
\newcommand{\MSOii}{\ensuremath{\mathrm{MSO}_2}\xspace}
\newcommand{\cardMSO}{\ensuremath{\mathrm{cardMSO}_1}\xspace}
\def\prebox#1{\mathop{\textsl{#1}}\nolimits}
\newcommand{\adj}{\prebox{adj}}
\newcommand{\true}{\prebox{tt}}
\newcommand{\false}{\prebox{ff}}
\renewcommand{\phi}{\varphi}
\newcommand{\Z}{\mathbb{Z}}
\newcommand{\cS}{{\cal S}}
\newcommand{\cT}{{\cal T}}
\newcommand{\cU}{{\cal U}}
\newcommand{\cV}{{\cal V}}
\newcommand{\cX}{{\cal X}}
\newcommand{\cY}{{\cal Y}}
\newcommand{\cZ}{{\cal Z}}
\begin{document}
\title{Expanding the expressive power of Monadic Second-Order logic on restricted graph classes}
\author{Robert Ganian\inst{1}
\and
Jan Obdr\v{z}\'alek\inst{2}}
\institute{
Vienna University of Technology, Austria\footnote{Robert Ganian acknowledges support by ERC (COMPLEX REASON, 239962).} \\
\email{rganian@gmail.com}
\and
Faculty of Informatics, 
Masaryk University, Brno, Czech Republic\footnote{Jan Obdr\v z\' alek 
is supported by the research centre Institute for 
Theoretical Computer Science (ITI), project No. P202/12/G061.}
\email{obdrzalek@fi.muni.cz}
}

\maketitle

\begin{abstract}
We combine integer linear programming and recent advances in Monadic Second-Order model checking to obtain two new algorithmic meta-theorems for graphs of bounded vertex-cover. The first shows that \cardMSO, an extension of the well-known Monadic Second-Order logic by the addition of cardinality constraints, can be solved in FPT time parameterized by vertex cover.
The second meta-theorem shows that the MSO partitioning problems introduced by Rao can also be solved in FPT time with the same parameter. 

The significance of our contribution stems from the fact that these
formalisms can describe problems which are W[1]-hard and even NP-hard on
graphs of bounded tree-width. Additionally, our algorithms have only an
elementary dependence on the parameter and formula. We also show that both
results are easily extended from vertex cover to neighborhood diversity.

\end{abstract}


\section{Introduction}

It is a well-known result of Courcelle, Makowski and Rotics that \MSOi (and
LinEMSO$_1$) model checking is in FPT on graphs of bounded clique-width
\cite{cmr00}. However, this leads to algorithms which are far from practical
-- the time complexity includes a tower of exponents, the height of which
depends on the \MSOi formula. Recently it has been shown that much faster
model checking algorithms are possible if we consider more powerful parameters such as vertex cover \cite{lam12} -- with only an elementary dependence of the runtime on both the \MSOi formula and parameter.

Vertex cover has been generally used to solve individual problems for which
traditional width parameters fail to help (see e.g. \cite{acs10,efg09,flm08,fgk10}).
Of course, none of these problems can be described by the standard \MSOi or LinEMSO$_1$ formalism.
This raises the following, crucial question: would it be possible to naturally extend the language of \MSOi to include additional well-studied problems without sacrificing the positive algorithmic results on graphs of bounded vertex-cover?

We answer this question by introducing \cardMSO (Definition \ref{def:card})
as the extension of \MSOi by linear cardinality constraints -- linear
inequalities on vertex set cardinalities and input-specified variables. The
addition of linear inequalities significantly increases the descriptive
power of the logic, and allows to capture interesting problems which are known to be hard on graphs of bounded tree-width. We refer to Section~\ref{sub:app} for a discussion of the expressive power and applications of \cardMSO, including a new result for the $c$-balanced partitioning problem (Theorem \ref{thm:cbal}).

The first contribution of the article lies in providing an FPT-time model checking algorithm for \cardMSO on graphs of bounded vertex cover. This extends the results on \MSOi model checking obtained by Lampis in \cite{lam12}, which introduce an elementary-time FPT \MSOi model checking algorithm parameterized by vertex cover. However, the approach used there cannot be straightforwardly applied to formulas with linear inequalities (cf. Section \ref{sec:cardmso} for further discussion).

\begin{theorem}
\label{thm:main}
There exists an algorithm which, given a graph $G$ with vertex cover of size
$k$ and a \cardMSO formula $\phi$ with $q$ variables, decides if $G\models \phi$ in time $2^{2^{O(k+q)}+|\phi|}+2^k|V(G)|$.
\end{theorem}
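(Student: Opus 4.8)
The plan is to reduce \cardMSO model checking to an integer linear program (ILP) of bounded dimension, then apply Lenstra's algorithm (or a more recent FPT-in-dimension ILP solver) to decide feasibility in the claimed time. The guiding intuition is that on a graph of vertex cover $k$, the vertices outside the cover are partitioned into a bounded number of \emph{types} according to their neighborhood inside the cover, and \MSOi cannot distinguish two vertices of the same type beyond a bounded threshold. The linear cardinality constraints then become linear constraints on how many vertices of each type are placed into each monadic set, which is exactly the shape of an ILP.

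Let me outline the steps. First I would recall Lampis's machinery from \cite{lam12}: on a graph with vertex cover $C$ of size $k$, the independent set $V(G)\setminus C$ splits into at most $2^k$ neighborhood types, and for \MSOi the relevant state of the graph is captured by counting, for each type, how many vertices of that type lie in each of the $2^q$ ``cells'' induced by an assignment of the $q$ set variables, but with all counts capped at some threshold $\tau = 2^{O(k+q)}$ (since \MSOi is invariant under swapping beyond-threshold multiplicities). This yields a bounded number of \emph{configurations}, and for each fixed configuration one can decide the pure \MSOi part of $\phi$ in time elementary in $k$ and $q$. Second, I would set up integer variables $x_{t,j}$ recording how many type-$t$ vertices are assigned to cell $j$, subject to the natural consistency constraints $\sum_j x_{t,j} = n_t$ where $n_t$ is the number of type-$t$ vertices; the vertices in the cover $C$ are handled by brute force over all $2^{O(k+q)}$ assignments since $|C|=k$. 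Third, I would translate each linear cardinality constraint of $\phi$ into a linear inequality over the $x_{t,j}$, because the cardinality of any \MSOi-definable set, restricted to the independent part, is a linear function of the cell counts. The decision ``$G\models\phi$'' becomes: does there exist a configuration satisfying the \MSOi part such that the associated ILP, augmented with the cardinality inequalities, is feasible?

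For the complexity, I would count the number of integer variables as $2^{O(k+q)}$ (types times cells), which is precisely the ILP dimension $d$. Lenstra-type results solve such an ILP in time $f(d)\cdot\mathrm{poly}(\text{input})$ with $f$ singly exponential in $d$, giving a contribution of the form $2^{2^{O(k+q)}}$; the $2^k|V(G)|$ term covers reading the graph and sorting vertices into their $2^k$ types, and the $+|\phi|$ absorbs parsing and evaluating the formula's \MSOi skeleton. Summing over the $2^{O(k+q)}$ many configurations keeps everything within the stated bound $2^{2^{O(k+q)}+|\phi|}+2^k|V(G)|$.

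The main obstacle, and the place where the paper presumably departs from a naive adaptation of \cite{lam12}, is the interaction between \emph{thresholding} and \emph{cardinality}. Lampis's argument collapses large multiplicities to a threshold because \MSOi cannot count, but cardinality constraints \emph{can} count, so one cannot simply cap the $x_{t,j}$. The fix I would pursue is to keep the \MSOi-type information thresholded (to bound the number of configurations) while leaving the cardinality counts as genuine, unbounded integer variables fed into the ILP; the delicate point is proving that a ``thresholded configuration'' together with its exact cell counts faithfully determines both the truth of the \MSOi part and the values of all cardinality terms, so that the ILP's solutions correspond exactly to satisfying assignments. Establishing this correspondence — that the \MSOi type of an assignment depends only on the thresholded profile, while the cardinality values depend linearly on the exact profile, and that the two can be decoupled — is the crux of the proof.
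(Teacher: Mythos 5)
Your overall strategy---threshold the types using Lampis's invariance, enumerate assignments for the existential prefix on the shrunk graph, and hand the exact subtype cardinalities to a bounded-dimension ILP solved by Lenstra/Kannan---is the same as the paper's, and your closing paragraph correctly identifies the decoupling statement the paper relies on (its Lemma~\ref{lem:lampis_qv}, a prefix-assignment refinement of Lemma~\ref{lem:lam}). However, there is one genuine gap: you treat the cardinality constraints as if they occurred only as a top-level conjunction, translating \emph{each} constraint of $\phi$ into an inequality and ``augmenting'' the ILP with all of them simultaneously. In \cardMSO the constraints are atoms that may sit under negations and inside disjunctions; the paper's own motivating examples need exactly this: $equi(A,B)=\big[|A|=|B|+1\big]\lor\big[|A|+1=|B|\big]\lor\big[|A|=|B|\big]$, and $[\rho_1<\rho_2]$ abbreviates $[\rho_1\leq\rho_2]\land\neg[\rho_2\leq\rho_1]$. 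On the equitable $3$-coloring formula your ILP would conjoin three mutually exclusive equalities and thus be infeasible on every input, so the algorithm as described would always answer ``no'' incorrectly; similarly it would impose a negated constraint positively rather than reversing it.

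The missing idea is the paper's \emph{pre-evaluation} step: branch over all $2^{|l(\phi)|}$ truth-value assignments $\alpha:l(\phi)\to\{\true,\false\}$, substitute $\alpha(l_i)$ for each constraint so that what gets model-checked on the reduced graph is a pure \MSOi formula, and then, per guess, insert into the ILP either the inequality (if guessed true) or its reversed integer form (if guessed false), so that feasibility certifies that the extension \emph{complies} with $\alpha$. This branching is precisely what produces the $+|\phi|$ term in the exponent of the stated running time, which you instead attributed to ``parsing''---a symptom of the missing step. A second, smaller omission: besides $\sum_j x_{t,j}=n_t$ and the cardinality inequalities, the ILP must also contain the constraints tying each variable to the enumerated configuration ($x_{t,j}$ fixed to the observed count for below-threshold cells, and bounded below for capped cells---the paper's constraint group~2 in Lemma~\ref{lem:ILP}); without these, a feasible solution may realize a different configuration than the one whose \MSOi truth you verified. (Also, the number of configurations is $2^{2^{O(k+q)}}$ rather than $2^{O(k+q)}$, though this does not hurt the bound.) With these repairs your argument coincides with the paper's proof.
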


The core of our algorithm rests on a combination of recent advances in \MSOi model checking and the use of Integer Linear Programming (ILP). While using ILP to solve individual difficult graph problems is not new \cite{flm08}, the goal here was to obtain new graph-algorithmic meta-theorems for frameworks containing a wide range of difficult problems. The result also generalizes to the neighborhood diversity parameter introduced in \cite{lam12} and to \MSOii (as discussed in Section \ref{sec:con}).

In the second part of the article, we turn our attention to a different, already studied extension of \MSOi: the MSO partitioning framework of Rao \cite{rao07}.
MSO partitioning asks whether a graph may be partitioned into an arbitrary number of sets so that each set satisfies a fixed \MSOi formula, and has been shown to be solvable in XP time on graphs of bounded clique-width.
Although MSO partitioning is fundamentally different from \cardMSO and both formalisms expand the power of \MSOi in different directions, we show that a combination of \MSOi model checking and ILP may also be used to provide an efficient FPT model-checking algorithm for \MSOi partitioning parameterized by vertex-cover or neighborhood diversity. 

\begin{theorem}
\label{thm:msop}
There exists an algorithm which, given a graph $G$ with vertex cover of size
$k$ and a MSO partitioning instance $(\phi,r)$ with $q$ variables, decides if $G\models (\phi,r)$ in time $2^{2^{O(q2^k)}}\cdot |(\phi, r)|+2^k|V(G)|$.
\end{theorem}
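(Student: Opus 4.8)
The plan is to combine the neighborhood-type structure forced by a small vertex cover with the same ILP machinery underlying Theorem~\ref{thm:main}, the difference being that here the integer program must account for an \emph{unbounded} number of partition classes. Fix a vertex cover $C$ of $G$ with $|C|=k$ and let $I=V(G)\setminus C$ be the remaining independent set. Each $v\in I$ is adjacent only to vertices of $C$, so the vertices of $I$ fall into at most $2^k$ \emph{neighborhood types} according to $N(v)\cap C$; let $I_t$ be the set of type-$t$ vertices and $n_t=|I_t|$. Computing these types and counts costs $2^k|V(G)|$. The decisive structural observation, exactly as in the proof of Theorem~\ref{thm:main}, is that two vertices of the same type are twins, so the \MSOi type of an induced subgraph up to the quantifier rank of $\phi$ is already determined by a bounded amount of data.

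Concretely, I would first establish the \emph{signature lemma}: whether an induced subgraph $G[W]$ satisfies $\phi$ depends only on the \emph{signature} of $W$, namely the pair $(W\cap C,(m_t)_t)$ where $m_t=\min(|W\cap I_t|,\tau)$ is the number of type-$t$ vertices in $W$ capped at a threshold $\tau=O(q)$. This holds because a formula with $q$ variables cannot distinguish more than $\tau$ mutually twin vertices. There are at most $2^k\cdot(\tau+1)^{2^k}=2^{O(q2^k)}$ signatures, and for each one a small representative graph can be built and tested by \MSOi model checking, yielding in time $2^{2^{O(q2^k)}}\cdot|(\phi,r)|$ the set of \emph{good} signatures, those whose induced subgraph satisfies $\phi$.

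The partitioning problem then reduces to a feasibility question over the good signatures. Since $C$ has only $k$ distinguishable vertices, at most $k$ partition classes meet $C$; I would enumerate the partition of $C$ among these classes (at most $2^{O(k\log k)}$ choices) together with a target capped-count vector for each, checking each special class against the good signatures. The remaining classes are disjoint from $C$ and hence interchangeable up to signature, so they are \emph{counted} rather than enumerated. An ILP with one variable $x_s$ per good $C$-free signature $s$ (how many classes carry that signature), distribution variables $y_{j,t}$ for the at most $k$ special classes, and slack variables $z_{s,t}$ absorbing surplus vertices of any type whose count is capped at $\tau$, then expresses exactly that every vertex of each type is placed in precisely one class, via the per-type equalities $\sum_j y_{j,t}+\sum_s(\mathrm{req}_t(s)\,x_s+z_{s,t})=n_t$, and that every non-empty class is good. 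The number of variables is $2^{O(q2^k)}$.

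Finally I would invoke Lenstra's algorithm for integer programming, whose running time is $p^{O(p)}$ in the number $p$ of variables, giving $2^{2^{O(q2^k)}}$ per guess; multiplying by the $2^{O(q2^k)}$ guesses and adding the preprocessing cost yields the claimed bound $2^{2^{O(q2^k)}}\cdot|(\phi,r)|+2^k|V(G)|$. I expect the main obstacle to be the correct treatment of the unbounded number of classes: the signature lemma caps per-type counts at $\tau$, yet the global placement constraint must track the \emph{exact} number of vertices used, so the ILP needs the slack variables (guarded by the linking inequalities $z_{s,t}\le n_t\,x_s$, forcing $z_{s,t}=0$ when $x_s=0$) to dump the surplus of a capped type into classes that remain good irrespective of how many such vertices they receive. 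Verifying that this encoding makes ILP feasibility coincide \emph{exactly} with the existence of a valid partition is the heart of the argument.
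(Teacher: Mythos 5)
Your proposal is, at its core, the paper's own proof. Your \emph{signatures} are exactly the paper's \emph{shapes} (the paper makes each cover vertex a singleton type, so a shape already records $W\cap C$ together with the capped per-type counts), the good signatures are identified by model checking one bounded-size representative per signature via Lampis's theorem, and the ILP with one counting variable per good signature, per-type balance constraints, and surplus vertices absorbed by classes whose intersection with a type exceeds the threshold is precisely the paper's formulation: your slack variables $z_{s,t}$ with the linking constraints $z_{s,t}\le n_t x_s$ are an equivalent rewriting of the paper's inequality pair a)/b). Your separate enumeration of the partition of $C$ into at most $k$ ``special'' classes, with guessed count vectors and extra variables $y_{j,t}$, is unnecessary, though harmless: since each cover vertex forms a singleton type, the per-type constraints for those types already force each cover vertex into exactly one counted class, so classes meeting $C$ can be handled by the same variables $x_s$ as all the others.

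Two genuine defects need repair. First, the threshold in your signature lemma is wrong: $\tau=O(q)$ does not suffice, because an \MSOi formula with set quantifiers can distinguish numbers of mutual twins up to a bound \emph{exponential} in the number of set variables (with $q_S$ set quantifiers, reusing variable names, one can express ``this set has at least $2^{q_S}$ elements'' by repeated halving), so your lemma as stated is false; the correct cap is $\tau=2^{q_S}q_v$, which is what Lemma \ref{lem:lam} provides and what the paper uses. Fortunately $\bigl(2^{O(q)}\bigr)^{2^k+k}$ is still $2^{O(q2^k)}$, so the signature count and the final running time survive the correction. Second, your ILP nowhere enforces that the total number of classes is exactly $r$: you must add the constraint $\sum_s x_s + (\text{number of special classes}) = r$, which is the paper's first (and explicitly stated) constraint. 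Without it, your program decides partitionability into an \emph{arbitrary} number of $\phi$-satisfying classes, which is not the MSO partitioning problem --- $r$ is part of the input. Both fixes are immediate and do not change your claimed bound.
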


\section{Preliminaries and Definitions}

\label{sec:pre}

\subsection{Vertex cover and types}

\noindent In the following text all graphs are simple and without loops. For a graph
$G$ we use $V(G)$
and $E(G)$ to denote the sets of its vertices and edges, and use $N(v)$ to
denote the set of neighbors of a vertex $v\in V(G)$.

The graph parameter we are primarily interested in is vertex cover.
A key notion related to graphs of bounded vertex cover is the notion of
a vertex type.

\begin{definition}[\cite{lam12}]
  Let $G$ be a graph. Two vertices $u,v\in V$ are of the same \emph{type}
  $T$ if $N(u)\setminus\{v\}=N(v)\setminus\{u\}$. We use $\cT_G$ to denote
  the set of all types of $G$ (or just $\cT$ if $G$ is clear from the
  context).
\end{definition}

Since each type is associated with its vertices, we also use $T$ to denote the set of
vertices of type $T$. Note that then $\cT_G$ forms a partition of the set
$V(G)$. 

For the sake of simplicity, we adopt the convention that, on a graph with a
fixed vertex cover $X$, we additionally separate each cover vertex into its
own type. Then it is easy to see that each type is an independent set, and a
graph with vertex cover of size  $k$ has at most $2^{k}+k$ types. 

It is often useful to divide vertices of the same type further into
subtypes. The subtypes are usually identified by a system of sets, and all
subtypes of a given type form a partition of that type:
\begin{definition} 
  Let $G$ be a graph and $\cU\subseteq
  2^{V(G)}$ a set of subsets of $V(G)$. Then two vertices $u,v\in V(G)$
  \emph{are of the same subtype} (w.r.t. $\cU$) if $u,v\in T$ for some
  $T\in\cT_G$ and $\forall U\in\cU.u\in U \iff v\in U$. We denote by
  $\cS_T^\cU$ the set of all subtypes of a type $T\in\cT_G$, and also define
  the set of all subtypes of (w.r.t. $\cU$) as $\cS_G^\cU$. (If $G$ and $\cU$
  are clear form the context, we may write $\cS$ instead of $\cS_G^\cU$)
\end{definition}

Finally, notice that $|\cS_G^\cU|\leq 2^{|\cU |}|\cT_G|$.

\subsection{\MSOi and its cardinality extensions}
\noindent Monadic Second Order logic (\MSOi) is a well established logic of graphs. It is the extension
of first order logic with quantification over vertices and sets of vertices.
\MSOi in its basic form can only be used to describe decision problems. To
solve optimization problems we may use LinEMSO$_1$ \cite{cmr00}, which is capable of finding maximum- and minimum-cardinality sets
satisfying a certain \MSOi formula. This is useful for providing simple
descriptions of well-known optimization problems such as Minimum Dominating
Set ($\adj$ is the adjacency relation):

$$\mbox{Min}(X)\!: \forall a \exists b\in X\!\!: (adj(a,b) \vee a=b)$$

The crucial point is that LinEMSO$_1$ only allows the optimization of set
cardinalities over all assignments satisfying a \MSOi formula. It is not
possible to use LinEMSO$_1$ to place restrictions on cardinalities of sets considered in the formula. 
In fact, such restrictions may be used to describe problems which are W[1]-hard on graphs of bounded tree-width, whereas all LinEMSO$_1$-definable problems may be solved in FPT time even on graphs of bounded clique-width \cite{cmr00}.

In this paper we define \cardMSO, an extension of \MSOi which allows
restrictions on set cardinalities. 

\begin{definition}[\cardMSO]
\label{def:card}
  The language of \cardMSO logic consists of expressions built from the following
  elements:
  \begin{itemize}
  \item variables $x,y\dots$ for vertices, and $X,Y\dots$ for sets of vertices
  \item the predicates $x\in X$ and $\adj(x,y)$ with the standard meaning
  \item equality for variables, quantifiers $\forall,\exists$ and the
    standard Boolean connectives
  \item $\true$ and $\false$ as the standard valuation constants
    representing true and false 
  \item the expressions $[\rho_1 \leq \rho_2]$, where the syntax of the $\rho$ expressions
    is defined as $\rho ::= n \mid |X| \mid \rho + \rho$ , where $n\in \Z$
    ranges over integer constants and $X$ over (vertex) set variables. 
  \end{itemize}

  We call expressions of the form $[\rho_1 \leq \rho_2]$ \emph{linear 
  (cardinality) constraints}, and write $[\rho_1 = \rho_2]$ as a shorthand for $[\rho_1\leq\rho_2]
  \land [\rho_2\leq\rho_1]$, and $[\rho_1 < \rho_2]$ for $[\rho_1\leq\rho_2]
  \land \neg[\rho_2 \leq \rho_1]$. A
  \emph{formula} $\phi$ of \cardMSO is an expression of the form $\phi =
  \exists Z_1\ldots \exists Z_m.\overline{\phi}$ such that $\overline{\phi}$
  is a \MSOi formula and $Z_1,\ldots, Z_m$ are the
  only variables which appear in the linear constraints.

  To give the \emph{semantics} of \cardMSO it is enough to define the
  semantics of cardinality constraints, the rest follows the standard \MSOi
  semantics. Let $\cV: \cX \to \Z$ be a valuation of set variables. Then the
  truth value of $[\rho_1 \leq \rho_2]$ is obtained be replacing each
  occurrence of $|X|$ with the cardinality of $\cV(X)$ and evaluating the
  expression as standard integer inequality.
\end{definition}

To give an example, the following \cardMSO formula is true if, and only if, a
graph is bipartite and both parts have the same cardinality:

\[
\exists X_1\exists X_2.(\forall v\in V. (v\in X_1 \iff \neg v\in X_2))\land
[|X_1|=|X_2|] \land \]
\[ (\forall u\in V. (\adj(u,v)\implies ((u\in
X_1 \land v\in X_2)\lor(u\in X_2 \land v\in X_1) ) )
\]

For a \cardMSO formula $\phi=\exists Z_1\ldots \exists Z_m.\overline{\phi}$ we call
$\exists Z_1\ldots \exists Z_m$ the \emph{prefix} of $\phi$, and the
variables $Z_i$ \emph{prefix variables}. We also put
$\cZ(\phi)=\{Z_1,\ldots,Z_m\}$, and often write just $\cZ$ if $\phi$ is
clear from the context.
Note that, since all prefix variables are existentially quantified set
variables, checking whether $G\models \phi$ (for some graph $G$) is
equivalent to finding a variable assignment $\chi: \cZ \to 2^{V(G)}$
such that $G\models_\chi\overline{\phi}$. We call such $\chi$ the
\emph{prefix assignment} (for $G$ and $\phi$). Note that the sets
$\chi(Z_i)$ can be used to determine subtypes, and therefore we often write
$\cS_G^\chi$ with the obvious meaning.

\subsection{ILP Programming}
\noindent Integer Linear Programming  (ILP) is a well-known framework for formulating problems, and will be used extensively in our approach. We provide only a brief overview of the framework:

\begin{definition}[p-Variable ILP Feasibility (p-ILP)]
Given matrices $A\in \mathbb{Z}^{m\times p}$ and $b\in \mathbb{Z}^{m\times
  1}$, the \emph{p-Variable ILP Feasibility (p-ILP) problem} is whether there exists a vector $x\in \mathbb{Z}^{p\times 1}$ such that  $A\cdot x\leq b$. The number of variables $p$ is the parameter.
\end{definition}

Lenstra \cite{len83} showed that p-ILP, together with its optimization
variant p-OPT-ILP, can be solved in FPT time. His running time was
subsequently improved by Kannan \cite{kan87} and Frank and Tardos
\cite{ft87}.

\begin{theorem}[\cite{len83,kan87,ft87,flm08}]
\label{thm:pilp}
p-ILP and p-OPT-ILP can be solved using $O(p^{2.5p+o(p)}\cdot L)$ arithmetic operations in space polynomial in $L$, $L$ being the number of bits in the input.
\end{theorem}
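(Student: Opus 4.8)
The plan is to treat this statement as an imported black box, since it is a classical result of Lenstra, sharpened by Kannan and by Frank and Tardos, and is in no way specific to our setting; here I merely recall the underlying algorithm. Write $P = \{x \in \mathbb{R}^p : Ax \leq b\}$ for the rational polyhedron defined by the constraints. First I would dispose of the easy cases with ordinary polynomial-time linear programming: I test whether $P$ is empty (in which case the answer is NO) and whether $P$ is full-dimensional. If $P$ lies inside an affine hyperplane, I change coordinates to eliminate one variable and recurse on a $(p-1)$-variable instance. Thus I may assume $P$ is bounded and full-dimensional.

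The heart of the algorithm is a divide-and-conquer that exploits the geometry of numbers. Using the ellipsoid method (or John's theorem) I compute an affine map $\tau$ that ``rounds'' $P$, producing concentric balls $B(c,r) \subseteq \tau(P) \subseteq B(c,R)$ with $R/r$ bounded by a function of $p$ alone. Under $\tau$ the lattice $\mathbb{Z}^p$ becomes a lattice $\Lambda$, to which I apply LLL basis reduction, obtaining a reduced basis $b_1,\dots,b_p$. The decisive tool is Khinchin's flatness theorem: a convex body either contains a lattice point or is flat in some lattice direction, i.e.\ there is a nonzero $c \in \mathbb{Z}^p$ along which the width of $P$ is bounded by a constant $f(p)$ depending only on the dimension. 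The reduced basis tells me which case holds: if the inscribed ball $B(c,r)$ is large relative to the covering radius of $\Lambda$, which the reduction controls, a lattice point is guaranteed and I output it; otherwise the last dual basis vector supplies the flat direction $c$.

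In the flat case every integer point of $P$ satisfies $c^{\top}x = t$ for one of at most $f(p)+1$ consecutive integers $t$. For each such $t$ I substitute the hyperplane constraint, reducing to a $(p-1)$-variable feasibility problem, and recurse. The recursion has depth at most $p$ and branching factor at most $f(p)+1$, so it generates $f(p)^{O(p)}$ leaves, each requiring only polynomially many arithmetic operations for the LP, rounding and reduction steps; exploring the tree depth-first reuses space, so the whole computation runs in space polynomial in $L$, matching the claim. For the optimization variant p-OPT-ILP I would reduce to feasibility by appending the bound $w^{\top}x \geq \theta$ for the objective $w$ and binary-searching on $\theta$, costing only a further polynomial factor in $L$.

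I expect the main obstacle to be the flatness theorem, and more precisely pinning the dimension-dependence of the flatness constant $f(p)$ and of the lattice-reduction guarantees tightly enough to reach the stated $p^{2.5p + o(p)}$ exponent rather than a cruder $p^{O(p)}$ or $2^{O(p^2)}$ bound. Making the concentric-ball rounding, the LLL guarantees, and the covering-radius estimate combine with the optimal constant is exactly the content of the Kannan and Frank--Tardos refinements, and is the part I would import rather than reprove.
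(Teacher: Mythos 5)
The paper does not prove this theorem at all --- it imports it wholesale from Lenstra, Kannan, and Frank--Tardos (via the formulation in the vertex-cover literature), which is exactly what you do, and your accompanying sketch of the Lenstra--Kannan algorithm (LP preprocessing, rounding, basis reduction, flatness, branching on hyperplanes, depth-first recursion, binary search for optimization) is a faithful account of the cited works. You also correctly flag that the sharp $p^{2.5p+o(p)}$ operation count and the polynomial-space guarantee are precisely the Kannan and Frank--Tardos refinements to be imported rather than reproved, so your treatment matches the paper's.
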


\section{\cardMSO Model Checking}
\label{sec:cardmso}

The main purpose of this section is to give a proof of
Theorem~\ref{thm:main}. The proof builds upon the following result of
Lampis:

\begin{lemma}[\cite{lam12}]
\label{lem:lam}
  Let $\phi$ be an $\MSOi$ formula with $q_S$ set variables and $q_v$
  vertex variables. Let $G_1$ be a graph, $v\in V(G_1)$ a vertex of type $T$ such
  that $|T|>2^{q_S}\cdot q_v$, and $G_2$ a graph obtained from $G_1$ by
  deleting $v$. Then $G_1\models \phi$ iff $G_2\models \phi$.   
\end{lemma}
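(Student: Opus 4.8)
The two implications follow at once from a single claim: $G_1$ and $G_2$ satisfy exactly the same $\MSOi$ sentences with at most $q_S$ set quantifiers and $q_v$ vertex quantifiers. The natural tool is the Ehrenfeucht--Fra\"iss\'e game for this fragment: a play consists of $q_S$ \emph{set rounds} (Spoiler colours a vertex set in one graph, Duplicator answers with a set in the other) and $q_v$ \emph{point rounds} (Spoiler pebbles a vertex, Duplicator pebbles a vertex in the other graph), and Duplicator wins if the pebbled vertices, together with the colours accumulated so far, induce a colour- and adjacency-preserving partial isomorphism. Since $G_2=G_1-v$, the only obstruction to a Duplicator strategy is the single surplus vertex $v$, and the whole purpose of the hypothesis $|T|>2^{q_S}q_v$ is to guarantee that this surplus can always be hidden among sufficiently many twins of $v$. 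I would first record the elementary fact that, because every type is an independent set and a vertex's type fixes its adjacency to the vertex cover, any two vertices lying in the same current subtype (in the sense of $\cS$) are interchangeable: exchanging them preserves adjacency to every other vertex and membership in every colour class.

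The plan is to have Duplicator maintain, after each round, the following invariant. First, the pebbled vertices of $G_1$ and $G_2$ are matched by a colour- and adjacency-preserving bijection. Second, for the colouring built so far every subtype contains exactly the same number of vertices in $G_1$ and in $G_2$, with a single exception: one distinguished \emph{critical} subtype $C$ carries one extra vertex on the $G_1$ side (initially $C=T$, which accounts for $v$). Third, and crucially, $C$ is still large: writing $s$ and $r$ for the numbers of set and point rounds yet to be played, I would require $C$ to contain at least $2^{s}\cdot r$ so-far-unpebbled vertices on the $G_2$ side. At the start $s=q_S$, $r=q_v$ and $C=T$, so this clause reads $|T|-1\ge 2^{q_S}q_v$, which is exactly the hypothesis of the lemma.

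It then remains to check that Duplicator can preserve the invariant through both kinds of move. In a \emph{set round} each subtype is split in two according to the colour Spoiler plays; Duplicator answers on the other graph so that the pebbled vertices are coloured consistently, mirrors every non-critical subtype exactly, and colours $C$ so that the smaller half matches precisely while the lone surplus falls into the larger half. That larger half retains at least $2^{s-1}r$ unpebbled vertices, so it becomes the new critical subtype and the invariant survives with $s$ decreased by one -- this is precisely where the factor $2^{q_S}$ is consumed, one halving per set quantifier. In a \emph{point round} Spoiler's pebble lies in some subtype; if that subtype has equal counts Duplicator mirrors it, and if it is the critical subtype Duplicator answers with any still-unpebbled twin, which exists because $C$ holds at least $2^{s}r\ge r\ge 1$ unpebbled vertices -- this is where the factor $q_v$ is consumed, and both counts drop by one so the invariant reappears with $r$ decreased by one. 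When no rounds remain, the first clause yields a colour- and adjacency-preserving partial isomorphism on the pebbled vertices, so the quantifier-free kernel of any such formula evaluates identically in the two graphs and Duplicator wins.

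I expect the main obstacle to be exactly this bookkeeping. The delicate point is not any single move but the simultaneous maintenance of all three clauses: one must verify that Duplicator can always realise the prescribed subtype counts in its own graph while respecting the forced colours of already-pebbled vertices, and, above all, that the single surplus vertex is never trapped in a subtype too small to absorb it across a sequence of up to $q_S$ bisections followed by $q_v$ deletions. The bound $|T|>2^{q_S}q_v$ is engineered to be tight for precisely this worst case, so the crux of the write-up will be a careful induction confirming that the third clause of the invariant is genuinely preserved by every legal response.
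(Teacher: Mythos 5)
The paper itself contains no proof of this lemma --- it is imported verbatim from Lampis~\cite{lam12} --- so your attempt can only be measured against that source, and it matches it: the Ehrenfeucht--Fra\"iss\'e game you describe, with one critical subtype carrying the surplus vertex and kept of size at least $2^{s}r$ for the remaining budgets (halving on set rounds, decrementing on point rounds, with all other subtypes mirrored exactly), is precisely the invariant used in Lampis's own induction over quantifiers. Your bookkeeping is correct, including the two delicate estimates $\lceil(2^{s}r+1)/2\rceil-1\ge 2^{s-1}r$ for set rounds and $2^{s}r-1\ge 2^{s}(r-1)$ for point rounds, and the interchangeability of same-subtype vertices that makes the final pebble matching a partial isomorphism, so the proof goes through as the standard argument for this statement.
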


In other words, a formula $\phi$ of \MSOi cannot distinguish between two
graphs $G_1$ and $G_2$ which differ only in the cardinalities of some types,
as long as the cardinalities in both graphs are at least $2^{q_S}\cdot q_v$.


\setcounter{footnote}{0}
This gives us an efficient algorithm for model checking $\MSOi$ on graphs of
bounded vertex cover: We first ``shrink'' the sizes of types to
$2^{q_S}\cdot q_v$ and then recursively evaluate the formula, at each
quantifier trying all possible choices for each set and vertex variable\footnote{Note that both Lemma \ref{lem:lam} and Theorem \ref{thm:lam} implicitly utilize the symmetry between vertices of the same type.}.

\begin{theorem}[\cite{lam12}]
\label{thm:lam}
  There exists an algorithm which, for a \MSOi sentence $\phi$ with $q$
  variables and a graph $G$ with $n$ vertices and vertex cover of size at most $k$,
  decides $G\models \phi$ in time $2^{2^{O(k+q)}}+O(2^kn)$.
\end{theorem}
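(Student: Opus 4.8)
The plan is to combine the type-shrinking guaranteed by Lemma~\ref{lem:lam} with a brute-force recursive evaluation of $\phi$ on the resulting small graph. First I would compute the type partition $\cT_G$: since we fix a vertex cover $X$ of size $k$ and place every cover vertex in its own type, the type of a vertex outside $X$ is fully determined by its neighborhood inside $X$, so all types can be read off in time $O(2^kn)$, and there are at most $2^k+k$ of them.

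Second, I would shrink. Writing $q=q_S+q_v$ for the split of the $q$ variables into set and vertex variables, Lemma~\ref{lem:lam} lets me delete any vertex from a type of size exceeding $2^{q_S}q_v$ without changing whether the sentence $\phi$ holds. Applying this repeatedly I obtain a graph $G'$ in which every type has at most $2^{q_S}q_v\le 2^{O(q)}$ vertices and $G\models\phi \iff G'\models\phi$. Consequently $|V(G')|\le (2^k+k)\cdot 2^{O(q)}=2^{O(k+q)}$. This is the step that does the real work, and its correctness is exactly the content of Lemma~\ref{lem:lam}; the only care needed is that the lemma applies to the fixed sentence $\phi$ as a whole, so the repeated single-vertex deletions remain sound.

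Third, with $N:=|V(G')|=2^{O(k+q)}$, I would evaluate $\phi$ on $G'$ by recursion on its structure: at a set quantifier try all $2^N$ subsets of $V(G')$, at a vertex quantifier try all $N$ vertices, and decide the predicates ($\adj$, membership, equality) at the leaves in the usual way. The recursion has depth $q$ and branches over at most $2^N$ options at each quantifier node, so the number of explored assignments is bounded by $(2^N)^{q_S}\cdot N^{q_v}\le 2^{Nq}=2^{2^{O(k+q)}}$; each leaf evaluation costs $\mathrm{poly}(N)+O(|\phi|)$, absorbed into the same bound up to an additive $|\phi|$ term.

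Finally, summing the cost of building and shrinking the types, $O(2^kn)$, with the cost of the recursive model check, $2^{2^{O(k+q)}}$, yields the claimed running time $2^{2^{O(k+q)}}+O(2^kn)$. The main obstacle is not any single step in isolation but the bookkeeping that keeps the inner exponent at $2^{O(k+q)}$: one must check that both the number of types and the shrunken type sizes contribute only $O(k)$ and $O(q)$ to that exponent, and that the quantifier-depth factor $q$ is absorbed since $q\le 2^{O(q)}$. The footnote's remark on symmetry offers an alternative that avoids enumerating raw subsets altogether --- one may instead enumerate, for each quantified set, only its profile across the $\le 2^k+k$ types --- which is conceptually cleaner but already unnecessary for the stated bound.
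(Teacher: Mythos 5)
Your proposal is correct and follows essentially the same route as the paper: it shrinks each type to at most $2^{q_S}q_v$ vertices via repeated application of Lemma~\ref{lem:lam}, then runs the trivial recursive model checker on the resulting graph of size $2^{O(k+q)}$, with the quantifier factor $q$ absorbed into the double exponent exactly as the paper's sketch (and Lampis's original argument) does. Nothing further is needed.
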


However, a straightforward adaptation of the approach sketched above does
not work with linear constraints. To see this, simply consider e.g. the
formula $\exists Z_1\exists Z_2 .[|Z_1|=|Z_2|+1]$. Changing the cardinality
of $Z_1$ by even a single vertex can alter whether the linear constraint is
evaluated as true or false, even if $|Z_1\cap T|$ is large for some type
$T$. On the other hand, observe that the truth value of a linear inequality
$[\rho_1 \leq \rho_2]$ depends only on the prefix variables, not on the rest
of the formula. With this in mind, we continue by sketching the general strategy for
proving Theorem \ref{thm:main}:



Given a graph $G$ and a formula $\phi$ we begin by creating the graph
$G_\phi$ from $G$ by reducing the size of each type to $2^{q_S}\cdot
q_v$. Since this construction can impact the possible values of linear
constraints in $\phi$, we replace each linear constraint with either $\true$
or $\false$, effectively claiming which linear constraints we expect to be
satisfied in $G$ (for some assignment to prefix variables). We try all $2^l$
possible truth valuations of linear constraints.

For each \MSOi formula $\psi$ obtained from $\phi$ by fixing some truth valuation of linear
constraints we now check whether $G_\phi\models \psi$, generating all prefix
assignments $\chi$ for which $G_\phi\models_\chi \overline{\psi}$. The
remaining step is to check whether some prefix assignment (in $G_\phi$) can be
extended to a prefix assignment in $G$ in such a way that $\psi$ would still hold in
$G$ and all linear cardinality constraints would evaluate to their guessed values. This
check is performed by the construction of an p-ILP formulation which is feasible if,
and only if, there is such an extension.

We will now formalize the proof we have just sketched. First, we need a few
definitions. We start by formalizing the process of ``shrinking'' (some types
of) a graph.

\begin{definition}
  Given a graph $G$ and a $\cardMSO$ formula $\phi=\exists Z_1\dots \exists
  Z_m.\overline{\phi}$ with $q_v$ vertex and $m+q_S$ set variables, we
  define the \emph{reduced graph $G_\phi$} to be the graph obtained from $G$
  by the following prescription:
  \begin{enumerate}
  \item For each type $T\in\cT_G$ s.t. $|T|>2^{q_S+m}q_v$ we delete
    the ``extra'' vertices
    of type $T$ so that exactly $2^{q_S+m}q_v$ vertices of this type remain, and
  \item we take the subgraph induced by the remaining vertices.
  \end{enumerate}
\end{definition}

Note that vertices of a type with cardinality at most $2^{q_S+m}q_v$ are never deleted
in the process of ``shrinking'' $G$, and $|V(G_\phi)|\leq |\cT_{G_\phi}|\cdot 2^{q_S+m}q_v$. Next we formalize the process of fixing
the truth values of linear cardinality constraints.

\begin{definition}
  Let $l(\phi)=\{l_1,\ldots,l_k\}$ be the list of all linear cardinality
  constraints in the formula $\phi$. Let $\alpha: l(\phi) \to \{\true,
  \false\}$, called the \emph{pre-evaluation function}, be an assignment of truth
  values to all linear constraints. Then by $\alpha(\phi)$ we denote
  the formula obtained from $\phi$ by replacing each linear constraint
  $l_i$ by $\alpha(l_i)$, and call $\alpha(\phi)$ the\emph{ pre-evaluation of
  $\phi$}. Note that $\alpha(\phi)$ is a $\MSOi$ formula.
\end{definition}

As we mentioned earlier, the truth value for each linear cardinality
constraint depends only on the values of prefix variables. Therefore all
linear constraints can be evaluated once we have fixed a prefix
assignment. We say that a prefix assignment $\chi$, of a \cardMSO formula
$\phi$, \emph{complies with} a pre-evaluation $\alpha$ if each linear
constraint $l\in l(\phi)$ evaluates to true (under $\chi$) if, and only if,
$\alpha(l)=\true$.

We also need a notion of extending a prefix assignment for
$G_\phi$ to $G$. In the following definition we use the implicit matching
between the subtypes $S$ of $G$ and the subtypes $S_\phi$ of its subgraph $G_\phi$.
\begin{definition}
  Given a graph $G$ and a $\cardMSO$ formula $\phi=\exists Z_1\dots \exists Z_m.\overline{\phi}$ with $q_v$ vertex and $q_S$ set variables in $\overline{\phi}$, we say that a prefix assignment $\chi$ for $G$ extends a prefix assignments $\chi_\phi$ for $G_\phi$ if for all $S\in\cS^{\chi}_G$:
  \begin{enumerate}
  \item $S=S_\phi$ if $|S_\phi|\leq 2^{q_S}q_v$
  \item $S\supseteq S_\phi$ if $|S_\phi|> 2^{q_S}q_v$
  \end{enumerate}
\end{definition}

Finally we will need the following statement, which directly follows from the proof
of Lemma~\ref{lem:lam} \cite{lam12}: 

\begin{lemma}
\label{lem:lampis_qv}
Let $\phi=\exists Z_1\dots \exists Z_m.\overline{\phi}$ be an $\MSOi$ formula,  with $q_S$ set variables in $\overline{\phi}$ and $q_v$ vertex
variables, and let $\chi_1:\cZ\to 2^{V(G_1)}$ be a prefix assignment in
$G_1$. Let $v\in V(G_1)$ be a vertex of subtype $S\in \cS_{G_1}^\chi$ such that $|S|>2^{q_s}q_v$, and
$G_2$ a graph obtained from $G_1$ by deleting $v$. Then $G_1\models_{\chi_1}
\phi$ iff $G_2\models_{\chi_2} \phi$, where $\chi_2$ is the prefix
assignment induced by $\chi_1$ on $G_2$.
\end{lemma}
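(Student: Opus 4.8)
The plan is to derive Lemma~\ref{lem:lampis_qv} as a labelled-graph instance of Lemma~\ref{lem:lam}, exploiting the fact that Lampis' argument relies only on the interchangeability (symmetry) of vertices sharing a type. First I would fix the prefix assignment $\chi_1$ and regard the sets $\chi_1(Z_1),\dots,\chi_1(Z_m)$ as fixed unary predicates (labels) on $G_1$, so that $(G_1,\chi_1)$ becomes a labelled graph and $G_1\models_{\chi_1}\phi$ is equivalent to this labelled graph satisfying $\overline{\phi}$ with $Z_1,\dots,Z_m$ interpreted by these predicates. The crucial observation is that, with respect to the labelled structure, two vertices are of the same subtype in $\cS_{G_1}^{\chi}$ precisely when they agree both on adjacency (the original type condition $N(u)\setminus\{v\}=N(v)\setminus\{u\}$) and on all the labels. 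Thus the subtypes play exactly the role that the types play in the unlabelled setting of Lemma~\ref{lem:lam}.

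Next I would verify that the deletion of $v$ is compatible with the induced assignment $\chi_2$, where $\chi_2(Z_i)=\chi_1(Z_i)\setminus\{v\}$. The type partition of $G_2$ is the restriction to $V(G_2)$ of the type partition of $G_1$: removing $v$ from both $N_{G_1}(u)$ and $N_{G_1}(w)$ preserves the equality $N(u)\setminus\{w\}=N(w)\setminus\{u\}$, so no pair of surviving vertices changes its type. Consequently the subtypes of $(G_2,\chi_2)$ are exactly the subtypes of $(G_1,\chi_1)$ with $v$ removed; since $|S|>2^{q_S}q_v\geq 1$, the subtype $S$ remains nonempty after deletion and all other subtypes are untouched. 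This is precisely the implicit matching between the subtypes of $G$ and those of $G_\phi$ used above.

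With this correspondence in place, Lampis' proof of Lemma~\ref{lem:lam} applies essentially verbatim, reading \emph{subtype} for \emph{type}. In that argument a vertex belonging to a type (here: subtype) of cardinality exceeding $2^{q_S}q_v$ may be added or removed without affecting satisfaction, because during the evaluation of $\overline{\phi}$ each vertex is classified by its membership in the $q_S$ set variables ($2^{q_S}$ possibilities) while the $q_v$ vertex variables can pin down at most $q_v$ specific vertices; once the subtype exceeds $2^{q_S}q_v$ there is always a spare, interchangeable witness. Note that the relevant factor is $2^{q_S}$ rather than $2^{q_S+m}$: the prefix variables are no longer quantified but fixed by $\chi_1$, so they contribute to the labelling (and hence refine the subtypes) but not to the count of set variables whose assignment must be guessed. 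Applying this interchangeability to $v$ yields $G_1\models_{\chi_1}\overline{\phi}$ iff $G_2\models_{\chi_2}\overline{\phi}$, which is the claim.

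The main obstacle I anticipate is one of bookkeeping rather than of mathematical depth, and indeed the statement is asserted to follow from the \emph{proof} of Lemma~\ref{lem:lam} rather than from its statement. The delicate points are confirming that the labelled reformulation matches the subtype definition exactly, and that the symmetry exploited by the original argument for same-type vertices is precisely the symmetry enjoyed by same-subtype vertices once $\chi_1$ is treated as part of the structure; no new combinatorial idea is required beyond tracking that the governing count is $2^{q_S}q_v$ with $q_S$ counting the set variables of $\overline{\phi}$ alone.
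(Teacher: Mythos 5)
Your proposal is correct and follows exactly the route the paper intends: the paper offers no written proof of Lemma~\ref{lem:lampis_qv}, merely asserting that it ``directly follows from the proof of Lemma~\ref{lem:lam}'' of Lampis, and your argument---treating the fixed prefix assignment $\chi_1$ as unary labels so that subtypes play the role of types, checking that deletion of $v$ preserves the (sub)type partition, and rerunning Lampis' interchangeability argument with the bound $2^{q_S}q_v$ counting only the set variables of $\overline{\phi}$---is precisely the fleshed-out version of that assertion. In particular, you correctly identified the one subtle point, namely that the prefix variables refine the subtypes rather than contribute to the exponent, which is what makes the bound $2^{q_S}q_v$ rather than $2^{q_S+m}q_v$.
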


For the remainder of this section let us fix a $\cardMSO$ formula $\phi=\exists Z_1\dots \exists Z_m.\overline{\phi}$
with $q_v$ vertex variables, $q_S$ set variables in $\overline{\phi}$ and with linear cardinality
constraints $l(\phi)=\{l_1,\ldots,l_k\}$.
We are now ready to state the main lemma:

\begin{lemma}
\label{lem:ILP}
Let $G$ be a graph, $\phi$ be a \cardMSO formula, $\chi_\phi$ be a prefix
assignment for $G_\phi$, and $\alpha$ a pre-evaluation such that
$G_\phi\models_{\chi_\phi} \alpha(\overline{\phi})$. Then we can, in time $O(|\cT_G| \cdot 2^m|l(\phi)|)$,
construct a p-ILP formulation which is feasible iff $\chi_\phi$ can be
extended to a prefix assignment $\chi$ for $G$ such that (a) $\chi$ complies with
$\alpha$, and (b) $G\models_\chi \overline{\phi}$. Moreover, the formulation
has $|\cT_G| \cdot 2^{m}$ variables.
\end{lemma}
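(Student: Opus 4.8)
The plan is to build a p-ILP whose variables record, for every vertex type and every membership pattern over the prefix variables, how many vertices carry that pattern; feasibility of the program will then correspond exactly to the existence of the desired extension $\chi$. Concretely, for a type $T\in\cT_G$ and a pattern $P\subseteq\{1,\dots,m\}$ (recording which of the prefix variables $Z_1,\dots,Z_m$ a vertex belongs to) I introduce a variable $x_{T,P}$ counting the vertices of type $T$ assigned pattern $P$. This gives $|\cT_G|\cdot 2^m$ variables, matching the claim. Since vertices of the same type are interchangeable, specifying a prefix assignment $\chi$ on $G$ is, up to this symmetry, the same as specifying such a distribution, and $\chi\leftrightarrow(x_{T,P})$ is the correspondence exploited in both directions.

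Next I would write down three families of linear constraints. \emph{Counting:} $x_{T,P}\ge 0$ and $\sum_{P} x_{T,P}=|T|$ for every $T$, so the $x_{T,P}$ genuinely partition each type of $G$. \emph{Extension:} using the matching between the subtypes of $G_\phi$ and of $G$, let $s_{T,P}$ be the size in $G_\phi$ of the subtype of type $T$ and pattern $P$ under $\chi_\phi$; I impose $x_{T,P}=s_{T,P}$ when $s_{T,P}\le 2^{q_S}q_v$ and $x_{T,P}\ge s_{T,P}$ when $s_{T,P}>2^{q_S}q_v$, which are exactly the two clauses of the definition of ``$\chi$ extends $\chi_\phi$''. \emph{Compliance:} under $\chi$ one has $|Z_j|=\sum_{T\in\cT_G}\sum_{P\colon j\in P} x_{T,P}$, so every $\rho$-expression becomes a linear form in the $x_{T,P}$; for each linear constraint $l_i=[\rho_1\le\rho_2]$ I add $\rho_1\le\rho_2$ if $\alpha(l_i)=\true$ and $\rho_1\ge\rho_2+1$ if $\alpha(l_i)=\false$ (the integer encoding of strict inequality). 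Building the compliance family requires expanding each of the $|l(\phi)|$ constraints over the $|\cT_G|\cdot 2^m$ variables, giving the stated construction time.

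For correctness I would argue both implications through this correspondence. The easy direction goes from an extension to a feasible point: given $\chi$ satisfying (a) and (b), read off $x_{T,P}$ from $\chi$; the counting family is immediate, the extension family holds because $\chi$ extends $\chi_\phi$, and the compliance family holds because $\chi$ complies with $\alpha$. The substantive direction goes from a feasible point $(x_{T,P})$ to an extension. I first realize $\chi$: the vertices of $G_\phi$ keep their $\chi_\phi$-patterns (so the small subtypes are reproduced exactly by the equalities), while the ``extra'' vertices of $G$ are poured into the large subtypes until the counts $x_{T,P}$ are reached. The counting family guarantees that the number of extra vertices of each type equals the total slack $\sum_{P}(x_{T,P}-s_{T,P})$, which by the extension family is concentrated in the large subtypes, so this distribution is always realizable and the resulting $\chi$ extends $\chi_\phi$; compliance with $\alpha$ is then exactly the compliance family.

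The remaining and, I expect, central step is to verify $G\models_\chi\overline{\phi}$. Here I would use compliance to pass to the genuine \MSOi formula $\alpha(\overline{\phi})$: since under $\chi$ each linear constraint evaluates to its $\alpha$-value, $G\models_\chi\overline{\phi}$ holds iff $G\models_\chi\alpha(\overline{\phi})$. Now $\chi$ and $\chi_\phi$ differ only by vertices lying in subtypes that are already large (size $>2^{q_S}q_v$) in $G_\phi$ and stay large throughout, so passing between $G_\phi$ and $G$ one vertex at a time and applying Lemma~\ref{lem:lampis_qv} at each step transfers the hypothesis $G_\phi\models_{\chi_\phi}\alpha(\overline{\phi})$ to $G\models_\chi\alpha(\overline{\phi})$, hence to $G\models_\chi\overline{\phi}$. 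The delicate points to check are that every added (or deleted) vertex indeed sits in a subtype of size $>2^{q_S}q_v$ at the moment Lemma~\ref{lem:lampis_qv} is invoked, so its hypothesis is met at each step, and that the linear forms in the compliance family faithfully encode the cardinalities $|Z_j|$.
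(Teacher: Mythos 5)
Your proposal is correct and follows essentially the same route as the paper's proof: your variables $x_{T,P}$ indexed by type and prefix-membership pattern are exactly the paper's subtype variables $x_S$, your three constraint families (counting, extension, compliance with the inequality reversed for $\alpha(l_i)=\false$) coincide with the paper's three groups, and both directions of the equivalence are argued identically, with Lemma~\ref{lem:lampis_qv} applied vertex-by-vertex to large subtypes. If anything, you spell out more carefully than the paper the final step of passing from $G\models_\chi\alpha(\overline{\phi})$ to $G\models_\chi\overline{\phi}$ via compliance, which the paper leaves implicit.
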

\begin{proof}
  We start by showing the construction of the p-ILP formulation. The set of
  variables is created as follows: For each subtype $S\in\cS_{G_\phi}^{\chi_\phi}$ we introduce a
  variable $x_S$ which will represent the cardinality of $S$ in $G$. There are three groups of constraints:

  1. We need to make sure that, for each type $T\in\cT_G$, the cardinalities of
  all subtypes of $T$ sum up to the cardinality of a type $T$. This is
  easily achieved by including a constraint $\sum_{S\subseteq T}x_S=|T|$ for each
  type $T$ (note that here $|T|$ is a constant).

  2. We need to guarantee that $\chi$ extends $\chi_\phi$. Therefore we include
  $x_S=|S_\phi|$ for each subtype with $|S_\phi|\leq 2^{q_S}q_v$, and
  $x_S>|S_\phi|$ if $|S_\phi|> 2^{q_S}q_v$.

  3. We need to check that $\chi$ complies with $\alpha$, i.e. that each linear
  constraint $l$ is either true or false based on the value of
  $\alpha(l)$. For each constraint $l$ we first replace each occurrence of
  $|Z_i|$ with the sum of cardinalities of all subtypes which are contained
  in $Z_i$, i.e. by $\sum_{S_\phi\subseteq Z_i}x_S$.  Then if $\alpha(l)=\true$,
  we simply insert the modified constraint into the formulation. Otherwise
  we first reverse the inequality (e.g. $>$ instead of $\leq$), and then
  also insert it.
  
  To prove the forward implication, let us assume that the p-ILP formulation
  is feasible. To define $\chi$ we start with $\chi=\chi_\phi$. Then for each
  subtype $S\in\cS_G$ if $x_S>|S_\phi|$ we add $x_S-|S_\phi|$ unassigned
  vertices of type $T$, where $T$ is the supertype of $S$. This is always
  possible thanks to constraints in 1. and 2. The constraints in 3. guarantee that
  $\chi$ complies with $\alpha$. Finally $G\models_\chi\overline{\phi}$ by
  Lemma~\ref{lem:lampis_qv}.

  For the reverse implication let $S\in\cS_G$ be the subtype identified by
  the set $\cY\subset \cZ$. Then we put $x_S=|\{v\in V(G)| \forall
  Z\in\cZ.v\in\chi(Z) \iff Z\in\cY \}$, and the p-ILP formulation is
  satisfiable by our construction. 
  Finally, it is easy to verify that the size of this p-ILP formulation is at most $O(|\cT_G| \cdot 2^{q_S}|l(\phi)|)$. \qed
\end{proof}

\bigskip

\proofof{Theorem \ref{thm:main}}
We start by constructing $G_\phi$ from $G$, which may be done by finding a
vertex cover in time $O(2^k\cdot n)$, dividing vertices into at most $2^k+k$ types (in linear time once we have a vertex cover) and keeping
at most $2^{q_S+m}q_v$ vertices in each type. 

Now for each pre-evaluation $\alpha:l(\phi)\to\{\true, \false\}$ we do the
following: We run the trivial recursive $\MSOi$ model checking algorithm on $G_\phi$,
by trying all possible assignments of vertices of $G_\phi$ to set and vertex
variables. Each time we find a satisfying assignment, we remember the values
of the prefix variables $\cZ$, and proceed to finding the next satisfying
assignment. Since the prefix variables of $\phi$ (and
$\alpha(\phi)$) are existentially quantified, their value is fixed before
$\alpha(\overline{\phi})$ starts being evaluated and therefore is the same at any
point of evaluating $\alpha(\overline{\phi})$. At the end of this stage we end up with
at most $(2^{|V(G_\phi)|})^m$ different
satisfying prefix assignments of $Z_1,\ldots,Z_m$ for each pre-evaluation $\alpha$.

We now need to check whether some combination of a pre-evaluation $\alpha$
and its satisfying prefix assignment $\chi_\phi$ from the previous step can be
extended to a satisfying assignment for $\overline{\phi}$ and $G$. This can be done by
Lemma~\ref{lem:ILP}.

To prove correctness, assume that there exists a satisfying assignment $\chi$ for $G$. We create $G'_{\phi}$ by, for each $T\in \cT_G$ such that $|T|>2^{q_S+m}q_v$, inductively deleting vertices from subtypes $S\subseteq T$ such that $|S|>2^{q_s}q_v$, until $|T|=2^{q_S+m}q_v$ for every $T$. Observe that $G'_{\phi}$ is isomorphic to $G_{\phi}$ and that there is a satisfying assignment $\chi'$ induced by $\chi$ on $G'_{\phi}$. Then applying the isomorphism to $\chi'$ creates a satisfying assignment $\chi_2$ on $G_{\phi}$, and Lemma \ref{lem:ILP} ensures that our p-ILP formulation is feasible for $\chi_2$.

To compute the time complexity of this algorithm, note that we first need
time $O(2^k\cdot n)$ to compute $G_\phi$. Then for each of the $2^{|l|}$
pre-evaluations we compute all the satisfying prefix assignments in time
$2^{2^{O(k+q_S+m)}q_v}$ by Theorem~\ref{thm:lam}. For each of the at most $(2^{|V(G_\phi)|})^m$ $=(2^{(2^k+k)\cdot 2^{q_S+m}q_v})^{m}$ satisfying prefix assignments for $G_\phi$, we check whether it can be extended to an assignment for $G$, which can be done in time at most $2^{2^{O(k+q_S+m)}}$ by applying Theorem~\ref{thm:pilp} on the p-ILP formulation constructed by Lemma \ref{lem:ILP}. We 
therefore need time $O(2^k\cdot n) + 2^m\cdot (2^{2^{O(k+q_S+m)}q_v+|l|} +
(2^{(2^k+k)\cdot 2^{q_S+m}q_v})^{m} \cdot 2^{2^{O(k+q_S+m)}})$, and the bound follows.  \qed

\medskip

\emph{\bfseries Remark:} The space complexity of the algorithm presented above may be improved by successively applying Lemma~\ref{lem:ILP} to each iteratively computed satisfying prefix assignment (for each pre-evaluation).

\medskip
Before moving on to the next section, we show how these results can be
extended towards well-structured dense graphs. It is easy to verify that the
only reference to an actual vertex cover of our graph is in Theorem
\ref{thm:lam} -- all other proofs rely purely on bounding the number of
types. In \cite{lam12} Lampis also considered a new parameter called
\emph{neighborhood diversity}, which is the number of different types of a
graph. I.e. graph $G$ has neighborhood diversity $k$ iff $|\cT_G|=k$. Since
there exist classes of graphs with unbounded vertex cover but bounded
neighborhood diversity (for instance the class of complete graphs),
parameterizing by neighborhood diversity may in some cases lead to better
results than using vertex cover.

\begin{corollary}
\label{cor:main}
There exists an algorithm which, given a graph $G$ with neighborhood diversity
$k$ and a \cardMSO formula $\phi$ with $q$ variables, decides if $G\models \phi$ in time $2^{k2^{O(q)}+|\phi|}+k\cdot poly(|V(G)|)$.
\end{corollary}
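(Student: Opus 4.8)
The plan is to reduce Corollary~\ref{cor:main} directly to Theorem~\ref{thm:main} by revisiting where vertex cover was actually used. The key observation, already flagged in the paragraph preceding the corollary, is that the only place the size-$k$ vertex cover enters the argument is in Theorem~\ref{thm:lam} (and the construction of $G_\phi$); every other component---Lemma~\ref{lem:lam}, Lemma~\ref{lem:lampis_qv}, the reduced graph $G_\phi$, and the p-ILP formulation of Lemma~\ref{lem:ILP}---is stated purely in terms of the number of types $|\cT_G|$. So first I would record that when $G$ has neighborhood diversity $k$, we have $|\cT_G|=k$ by definition, and that the partition of $V(G)$ into types can be computed in time $k\cdot poly(|V(G)|)$ without any reference to a cover. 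This replaces the $O(2^k\cdot n)$ term from the vertex-cover case (where one first spends $O(2^k n)$ finding a cover and then gets $2^k+k$ types) by the cleaner $k\cdot poly(|V(G)|)$ term with $|\cT_G|=k$ types directly.

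Next I would re-run the proof of Theorem~\ref{thm:main} verbatim, simply substituting $|\cT_G|=k$ wherever the bound $2^k+k$ on the number of types appeared. Concretely: the reduced graph $G_\phi$ now has $|V(G_\phi)|\le k\cdot 2^{q_S+m}q_v$ vertices; the trivial recursive \MSOi{} model checking on $G_\phi$ (Theorem~\ref{thm:lam}, whose analysis depends on $G_\phi$ only through its number of types and its size) runs in time $2^{k\cdot 2^{O(q)}}$ rather than $2^{2^{O(k+q)}}$, since the branching is over at most $2^{|V(G_\phi)|}$ choices per set variable and $|V(G_\phi)|$ is now only singly exponential in $q$ and linear in $k$; and the number of satisfying prefix assignments per pre-evaluation is at most $(2^{|V(G_\phi)|})^m = 2^{k\cdot 2^{O(q)}}$. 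The p-ILP formulation produced by Lemma~\ref{lem:ILP} has $|\cT_G|\cdot 2^m = k\cdot 2^{O(q)}$ variables, so by Theorem~\ref{thm:pilp} each feasibility check costs $2^{k\cdot 2^{O(q)}}$. Collecting the $2^{|l(\phi)|}$ pre-evaluations (with $|l(\phi)|\le|\phi|$) and folding all the singly-exponential-in-$q$, linear-in-$k$ exponents together yields the claimed bound $2^{k2^{O(q)}+|\phi|}+k\cdot poly(|V(G)|)$.

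The one genuinely substantive point---and the step I expect to require the most care rather than routine bookkeeping---is verifying that Theorem~\ref{thm:lam}'s running time really does degrade gracefully to $2^{k2^{O(q)}}$ when it is driven by neighborhood diversity instead of vertex cover. The published statement of Theorem~\ref{thm:lam} bundles the type count as $2^k+k$ inside its $2^{2^{O(k+q)}}$ bound, so I must re-examine its proof to confirm that the doubly-exponential dependence came solely from there being $2^k+k$ types, and that re-parameterizing by $|\cT_G|=k$ replaces the inner $2^k$ by $k$. This is precisely the reduction Lampis~\cite{lam12} carries out for plain \MSOi{} model checking parameterized by neighborhood diversity, so I would appeal to that observation: the shrunk graph has $k$ types each of size $2^{q_S}q_v$, hence $|V(G_\phi)|=k\cdot 2^{O(q)}$, and the recursive evaluation branches over subsets of this vertex set, giving the single exponential in $k$ and the extra exponential only in $q$. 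Once this is granted, the remaining arithmetic is identical in structure to the proof of Theorem~\ref{thm:main} and the stated bound follows. \qed
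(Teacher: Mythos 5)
Your proposal is correct and follows essentially the same route as the paper's own proof: compute the type partition directly in polynomial time (the paper cites Theorem~5 of \cite{lam12} for this), then rerun the argument of Theorem~\ref{thm:main} with the type count $2^k+k$ replaced by $k$, noting that every ingredient (Lemmas~\ref{lem:lampis_qv} and~\ref{lem:ILP}, the construction of $G_\phi$, and the recursive model checking) depends only on the number of types. Your extra care in re-examining Theorem~\ref{thm:lam}'s analysis is exactly the observation the paper leaves implicit.
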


\begin{proof}
The proof is nearly identical to the proof of Theorem \ref{thm:main}. The only change is that we begin by computing the neighborhood diversity and the associated partition into types (which may be done in polynomial time, cf. Theorem 5 in \cite{lam12}), and we of course use the fact that the number of types is now at most $k$ instead of $2^k+k$.
\qed
\end{proof}

\section{Applications}
\label{sub:app}

\subsection{Equitable problems}

Perhaps the most natural class of problems which may be captured by \cardMSO
but not by $\MSOi$ (or even $\MSOii$) are equitable problems.  Equitable
problems generally ask for a partitioning of the graph into a (usually
fixed) number of specific sets of equal ($\pm 1$) cardinality.

\noindent\emph{Equitable c-coloring} \cite{mey73} is probably the most extensively studied
example of an equitable problem. It asks for a partitioning of a graph into
$c$ equitable independent sets and has applications in scheduling, garbage
collection, load balancing and other fields (see e.g. \cite{das06,baz01}).
While even equitable 3-coloring is W[1]-hard on graphs of bounded tree-width
\cite{fetal07}, equitable c-coloring may easily be expressed in \cardMSO:
$$\exists A,B,C: partition(A,B,C)\wedge \forall x,y: ((x,y\in A \vee x,y\in B \vee x,y\in C) \implies \neg adj(x,y)) $$
\vspace{-0.6cm}
$$\wedge equi(A,B)\wedge equi(A,C)\wedge equi(B,C)\mbox{, where}$$

\begin{itemize}
\item[$\bullet$] $partition(A,B,C)=\big(\forall x: (x\in A \vee \neg x\in B \vee \neg x\in C) \wedge (\neg x\in A \vee x\in B \vee \neg x\in C) \wedge (\neg x\in A \vee \neg x\in B \vee x\in C)\big)$.
\item[$\bullet$] $equi(T,U)=(\big[|T|=|U|+1\big] \vee \big[|T|+1=|U|\big] \vee \big[|T|=|U|\big])$.
\end{itemize}

\noindent\emph{Equitable connected c-partition} \cite{efg09} is another studied equitable
problem which is known to be W[1]-hard even on graphs of bounded path-width
but which admits a simple description in \cardMSO:
$$\exists A,B,C: partition(A,B,C)\wedge conn(A)\wedge conn(B)\wedge conn(C)$$
\vspace{-0.6cm}
$$\wedge equi(A,B)\wedge equi(A,C)\wedge equi(B,C)\mbox{, where}$$

\begin{itemize}
\item[$\bullet$] $conn(U)= \big( \forall T: (\forall x: x\in T \implies x\in U)\implies (T=U \vee (\neg \exists a: a\in T) \vee \exists a,b: a\in U\wedge \neg a\in T \wedge b\in T \wedge adj(a,b) \big)$.
\end{itemize}

\subsection{Solution size as input}
\cardMSO allows us to
restrict the set cardinalities by constants given as part of the input. For instance,
the formula below expresses the existence of an Independent Dominating Set of cardinality $k$:
$$\exists X: (\forall a,b\in X. \neg adj(a,b)) \land $$
\vspace{-0.6cm}
$$\land (\forall b\in V.b\in X \lor (\exists a\in X.\adj(a,b))) \land [|X|=k]$$

Notice that there is an equivalent \MSOi formula for any
fixed $k$. However, the number of variables in the \MSOi formula would depend
on $k$, which would negatively impact on the runtime of model checking. On
the other hand, using an input-specified variable only requires us to change
a constant in the p-ILP formulation, with no impact on runtime.

\subsection{c-balanced partitioning} 
Finally, we show an example of how our approach can be used to obtain new results even for optimization problems, which are (by definition) not
expressible by \cardMSO. While the presented algorithm does not rely directly on Theorem \ref{thm:main}, it is based on the same fundamental ideas.

The problem we focus on is $c$-balanced
partitioning, which asks for a partition of the graph into $c$ equitable
sets such that the number of edges between different sets is
minimized. The problem was first introduced in \cite{mac78}, has
applications in parallel computing, electronic circuit design and sparse
linear solvers and has been studied extensively (see e.g. \cite{ff12,ar06}).
The problem is notoriously hard to approximate, and while an exact XP
algorithm exists for the $c$-balanced partitioning of trees parameterized by
$c$ \cite{ff12,mac78}, no parameterized algorithm is known for graphs of
bounded tree-width. 

\begin{theorem}
\label{thm:cbal}
There exists an algorithm which, given a graph $G$ with vertex cover of size
$k$ and a constant $c$, solves $c$-balanced partitioning in time $2^{2^{O(k+c)}}+2^k|V(G)|$.
\end{theorem}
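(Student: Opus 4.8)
The plan is to reuse the structural idea behind the \cardMSO algorithm: on a graph with vertex cover $X$ of size $k$, the set $V(G)\setminus X$ is independent, so every edge of $G$ is either internal to $X$ or joins a cover vertex to a non-cover vertex. Hence a partition of $V(G)$ into parts $V_1,\dots,V_c$ cuts only edges of these two kinds, and the cut between $X$ and the non-cover vertices is fully determined by how $X$ is split among the parts together with the \emph{type} of each non-cover vertex (all vertices of a non-cover type $T$ share a common neighbourhood $N(T)\subseteq X$). The obstruction to writing this directly as a linear program is that the number of cut edges \emph{inside} $X$ is a quadratic function of the assignment. I would remove this obstruction by brute force: iterate over all $c^{\,k}$ ways of assigning the $k$ cover vertices to the $c$ parts, and build a separate ILP for each.

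So fix an assignment of $X$ to the parts; let $a_i$ be the number of cover vertices placed in part $i$, and let $\mathrm{cut}_X$ be the now-constant number of edges inside $X$ with endpoints in different parts. For every non-cover type $T\in\cT_G$ and every part $i\in\{1,\dots,c\}$ I introduce a variable $x_{T,i}$ standing for the number of type-$T$ vertices placed in $V_i$; there are at most $2^{k}\cdot c$ such variables. Since every neighbour of a type-$T$ vertex lies in $X$ and $X$ is already partitioned, placing one type-$T$ vertex in part $i$ cuts exactly $|N(T)|-|N(T)\cap V_i|$ edges, a constant. The objective to minimise is therefore the \emph{linear} function $\mathrm{cut}_X+\sum_{T}\sum_{i}\bigl(|N(T)|-|N(T)\cap V_i|\bigr)\,x_{T,i}$. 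I impose $x_{T,i}\ge 0$, one conservation equation $\sum_i x_{T,i}=|T|$ per non-cover type, and, writing $n=|V(G)|$, the equitability constraints $\lfloor n/c\rfloor \le a_i+\sum_T x_{T,i}\le \lceil n/c\rceil$ for each part; because $V(G)\setminus X$ is independent, no further edges need be accounted for.

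Correctness follows because any equitable $c$-partition restricts to some assignment of $X$ and to a feasible integer solution of the corresponding ILP with the same cut value, and conversely every feasible solution reconstructs an equitable partition with exactly that value, as vertices of a fixed type are interchangeable. The algorithm returns the minimum of $\mathrm{cut}_X$ plus the optimal ILP value over all $c^{\,k}$ assignments for which the ILP is feasible. For the running time, computing the vertex cover and the types costs $O(2^k n)$. Each of the $c^{\,k}$ programs has $p\le 2^{k}c$ variables and integer data of magnitude at most $n$, so by Theorem~\ref{thm:pilp} (applied to p-OPT-ILP) it is solved in $p^{2.5p+o(p)}\cdot L$ arithmetic operations; using $p\log p = 2^{O(k+c)}$ and $L=2^{O(k+c)}\log n$ this is $2^{2^{O(k+c)}}\cdot\log n$, and both the $\log n$ factor and the $c^{\,k}$ outer loop are absorbed into $2^{2^{O(k+c)}}+2^k|V(G)|$ (if $\log n\le 2^{2^{O(k+c)}}$ the product stays double-exponential, otherwise it is at most $(\log n)^2\le n$). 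The only genuine obstacle is the quadratic $X$-internal cut, which the $c^{\,k}$-fold enumeration linearises; the remainder is a direct application of p-OPT-ILP to the type-count variables.
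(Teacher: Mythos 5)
Your proof is correct, and it arrives at essentially the same ILP core as the paper, but by a genuinely more elementary route. The paper's proof invokes the whole machinery of Theorem \ref{thm:main}: it model-checks the equitable $c$-partition formula on the reduced graph $G_\phi$, enumerates pre-evaluations of the cardinality constraints together with all satisfying prefix assignments over $G_\phi$, and then augments the p-ILP formulation of Lemma \ref{lem:ILP} with an objective variable $\beta = const_0 + \sum_{S\in U} const_S x_S$ minimized by p-OPT-ILP. You bypass the logic machinery entirely: you enumerate the $c^k$ assignments of the cover $X$ to the parts (a much smaller and more transparent search space than the set of satisfying prefix assignments over $G_\phi$), observe that this fixes the $X$-internal cut and makes the remaining cut linear in the type-count variables $x_{T,i}$, and you impose equitability directly via the two-sided constraints $\lfloor n/c\rfloor \le a_i + \sum_T x_{T,i} \le \lceil n/c\rceil$ rather than routing it through pre-evaluated \cardMSO constraints. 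The correspondence between the two formulations is exact --- your $\mathrm{cut}_X$, $|N(T)|-|N(T)\cap V_i|$, and $x_{T,i}$ are the paper's $const_0$, $const_S$, and $x_S$ --- and both analyses land on the same $2^{2^{O(k+c)}}+2^k|V(G)|$ bound. What your version buys is a self-contained argument needing no reduced graph and no appeal to Lemma \ref{lem:lam}; what the paper's version buys is a demonstration that the generic pipeline of Theorem \ref{thm:main} can be adapted to optimization objectives, which is the broader point the authors return to in their conclusions. One detail worth highlighting as correct: your two-sided size constraints are exactly equivalent to the paper's pairwise $\pm 1$ notion of equitability (a one-sided bound of $\lceil n/c\rceil$ alone would not be), and vertices of a common non-cover type are indeed interchangeable since the complement of a vertex cover is independent, so feasible ILP solutions do reconstruct partitions of the stated cut value.
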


\begin{proof}
We begin by applying the machinery of Theorem \ref{thm:main} to the
  \cardMSO formula $\phi$ for equitable $c$-partitioning $\phi$:
$$\exists A,B,C: partition(A,B,C)\wedge equi(A,B)\wedge equi(A,C)\wedge equi(B,C)$$

Recall that this means trying all possible assignments of the
  $c$ set variables in $G_{\phi}$ and testing whether each assignment can be
  extended to $G$ in a manner satisfying $\phi$.  Unlike in Theorem
  \ref{thm:main} though, we need to tweak the p-ILP formulations to not
  only check the existence of an extension $\chi$ for our pre-evaluation
  $\alpha$, but also to find the $\chi$ which minimizes the size of the cut
  between vertex sets.

To do so, we add one variable $\beta$ into the formulation and use a p-OPT-ILP formulation minimizing $\beta$. We also add a single equality into the formulation to make $\beta$ equal to the size of the cut between the $c$ vertex sets. While it is not possible to count the edges directly, the fact that we always have a fixed satisfying prefix assignment in $G_\phi$ allows us to calculate $\beta$ as:
\begin{center}$\beta=const_0+\sum_{S\in U} const_Sx_S$, where \end{center} 

\begin{itemize}
\item $const_0$ is the number of edges between all pairs of cover vertices
  with different types (this is obtained from the prefix assignment in
  $G_\phi$),
\item $U$ is the set of subtypes which do not contain cover vertices (recall
  that each cover vertex has its own subtype),
\item $x_S$ is the ILP variable for the cardinality of subtype $S$
  (cf. Lemma \ref{lem:ILP}),
\item For each subtype $S$, $const_S$ is the number of adjacent vertices in
  the cover assigned to a different vertex set than $S$. The values of $const_S$
  depend only on the subtype $S$ and the chosen prefix assignment $\chi_\phi$ in $G_\phi$.
\end{itemize}

For each satisfying prefix assignment $\chi_\phi$ in $G_\phi$, the p-OPT-ILP formulation will not only check that this may be extended to an assignment $\chi$ in $G$, but also find the assignment in $G$ which minimizes $\beta$. 
All that is left is to store the best computed $\beta$ for each satisfying prefix assignment and find the satisfying prefix assignment with minimum $\beta$ after the algorithm from Theorem \ref{thm:main} finishes.

For correctness, assume that there exists a solution which is smaller than the minimal $\beta$ found by the algorithm. Such a solution would correspond to an assignment of $\phi$ in $G$, which may be reduced to a prefix assignment $\chi$ of a pre-evaluation $\alpha(\phi)$ in $G_\phi$. If we construct the p-ILP formulation for $\chi$ and $\alpha(\phi)$, then the obtained $\beta$ would equal the size of the cut. However, our algorithm computes the $\beta$ for all pre-evaluations and satisfying prefix assignments in $G_\phi$, so this gives a contradiction.
\qed
\end{proof}
\section{MSO Partitioning}
\label{sec:msop}
The MSO (or \MSOi) partitioning framework was introduced by Rao in \cite{rao07} and allows the description of many problems which cannot be formulated in MSO, such as Chromatic number, Domatic number, Partitioning into Cliques etc.
While a few of these problems (e.g. Chromatic number) may be solved on graphs of bounded tree-width in FPT time by using additional structural properties of tree-width, MSO partitioning problems in general are W[1]-hard on such graphs.

\begin{definition}[MSO partitioning]
Given a MSO formula $\phi$, a graph $G$ and an integer $r$, can $V(G)$ be partitioned into sets $X_1, X_2, \dots X_r$ such that $\forall i\in \{1,2,\dots, r\}: X_i \models \phi$~\emph{?}
\end{definition}

Similarly to Section \ref{sec:cardmso}, we will show that a combination of ILP and MSO model checking allows us to design efficient FPT algorithms for MSO partitioning problems on graphs of bounded vertex cover. However, here the total number of sets is specified on the input and so the number of subtypes is not fixed, which prevents us from capturing the cardinality of subtypes by ILP variables. Instead we use the notion of \emph{shape}:

\begin{definition}
Given a graph $G$ and a MSO$_1$ formula $\phi$ with $q_S, q_v$ set and vertex variables respectively, two sets $A,B\subseteq V(G)$ have the same \emph{shape} iff for each type $T$ it holds that either $|A\cap T|=|B\cap T|$ or both $|A\cap T|,|B\cap T|> 2^{q_S}q_v$.
\end{definition}

Let $A$ be any set of shape $s$. We define $|s\cap T|$, for any type $T$,
as:

\[ |s\cap T| = 
\begin{cases}
  |A\cap T| & \text{ if } |A\cap T|\leq 2^{q_S}q_v \\
  \top & \text{ otherwise} 
\end{cases}
\]


\noindent
Since $\phi$ is a \MSOi formula, from
Lemma \ref{lem:lam} we immediately get:

\begin{tclaim}
\label{c:p1}
For any two sets $A,B\subseteq V(G)$ of the same shape, it holds that $A\models \phi$ iff $B\models \phi$, and
\end{tclaim}
\begin{tclaim}
\label{c:p2}
Given a MSO formula with $q$ variables, a graph $G$ with vertex cover of
size $k$ has at most 
$(2^{q_S}q_v)^{2^k+k}$ distinct shapes.
\end{tclaim}

With these in hand, we may proceed to:

\vspace{0.2cm}
\proofof{Theorem \ref{thm:msop}}
First, we consider all at most $(2^{q_S}q_v)^{2^k+k}$ shapes of a set $X$.
For each such shape $s$, we decide whether a set $X_s$ of shape $s$ satisfies $\phi$ by Theorem \ref{thm:lam}. 
We then create an ILP formulation with one variable $x_s$ for each shape $s$ satisfying $\phi$. The purpose of $x_s$ is to capture the number of sets $X_s$ of shape $s$ in the partitioning of~$G$.

Two conditions need to hold for the number of sets of various shapes. First, the total number of sets needs to be $r$.
This is trivial to model in our formulation by simply adding the constraint that the sum of all $x_s$ equals $r$.

Second, it must be possible to map each vertex in $G$ to one and only one set $X$ (to ensure that the sets form a partition). Notice that if a partition were to only contain shapes with at most $2^{q_S}q_v$ vertices in each $T$, then the cardinality of $s\cap T$ would be fixed and so the following set of constraints for each $T\in \cT$ would suffice:

$\sum_{\forall s} x_s\cdot |s\cap T| = |T|$

However, in general the partition will also contain shapes with more than $2^{q_S}q_v$ vertices in $T$, and in this case we do not have access to the exact cardinality of their intersection with $T$. To this end, for each $T\in \cT$ we add the following two sets of constraints:
\begin{itemize}
\item[a)] $\sum_{\forall s:|s\cap T|\leq {2^{q_S}q_v}} x_s\cdot |s\cap T|+\sum_{\forall s:|s\cap T|=\top} x_s\cdot ({2^{q_S}q_v})\leq |T|$
\item[b)] $\sum_{\forall s:|s\cap T|\leq {2^{q_S}q_v}} x_s\cdot |s\cap T|+\sum_{\forall s:|s\cap T|=\top} x_s\cdot |T|\geq |T|$
\end{itemize}
Here a) ensures that a partitioning of $G$ into $\sum_{\forall s}x_s$ sets of shape $s$ can ``fit'' into each $T$ and b) ensures that there are no vertices which cannot be mapped to any set. Notice that if the partition contains any shape $s$ which intersects with $T$ in over $2^{q_S}q_v$ vertices then b) is automatically satisfied, since all unmapped vertices in $T$ can always be added to $s$ without changing $X_s\models \phi$.

If the p-ILP formulation specified above has a feasible solution, then we can construct a solution to $(\phi,r)$ on $G$ by partitioning $G$ as follows: For each 
shape $s$ we create sets $X_{s,1}\dots X_{s,x_s}$. Then in each type $T$ in $G$, we map $|T\cap s|$ yet-unmapped vertices to each set $X_{s,i}$. Constraints a) make sure this is possible. If there are any vertices left unmapped in $T$, then due to constraint b) there must exist some set $X'$ such that $|X'\cap T|>2^{q_S}q_v$. We map the remaining unmapped vertices in $T$ to any such set $X'$, resulting in a partition of $G$. Finally, the fact that each of our sets satisfies $\phi$ follows from our selection of shapes.

On the other hand, if a solution to $(\phi,r)$ on $G$ exists, then surely each set in the partition has some shape and so it would be found by the p-ILP formulation. The total runtime is the sum of finding the vertex cover, the time of model-checking all the shapes and the runtime of p-ILP, i.e. $O(2^k|V(G)) +2^{2^{O(k+q)}}\cdot q^{(2^k+k)}+q^{(2^k+k)\cdot q^{O(2^k+k)}}$.
\qed

Theorem \ref{thm:msop} straightforwardly extends to neighborhood diversity as well. Directly bounding the number of types by $k$ results in a bound of $(2^{q_S}q_v)^k$ on the number of distinct shapes in Claim \ref{c:p2}, and so we get:

\begin{corollary}
There exists an algorithm which, given a graph $G$ with neighborhood diversity at most $k$ and a MSO partitioning instance $(\phi,r)$ with $q$ variables, decides if $G\models (\phi,r)$ in time $2^{2^{O(qk)}}\cdot |(\phi, r)|+k|V(G)|$.
\end{corollary}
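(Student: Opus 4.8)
The plan is to reuse the proof of Theorem~\ref{thm:msop} essentially verbatim, altering only the two places where the bound on the number of types enters, exactly as was done when passing from Theorem~\ref{thm:main} to Corollary~\ref{cor:main}. Concretely, I would first replace the vertex-cover computation by the computation of the neighborhood diversity together with its associated partition of $V(G)$ into types; by Theorem~5 in \cite{lam12} this can be carried out in polynomial time, and by definition the resulting partition has at most $k$ types. Since every statement in Section~\ref{sec:msop} other than the model-checking subroutine depends only on the number of types (and not on a concrete vertex cover), I would then simply substitute $k$ for the quantity $2^k+k$ throughout.

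In more detail, Claim~\ref{c:p1} is unaffected, as it is a direct consequence of Lemma~\ref{lem:lam}, which is itself phrased purely in terms of types. The bound of Claim~\ref{c:p2} on the number of distinct shapes improves to $(2^{q_S}q_v)^{k}$, since a shape is determined by its (possibly capped) intersection with each of the $k$ types. I would then construct the same p-ILP formulation as in the proof of Theorem~\ref{thm:msop}: one variable $x_s$ per shape $s$ satisfying $\phi$, the constraint that the sum of all $x_s$ equals $r$, and, for each type $T$, the two ``fitting'' constraints~a) and~b) guaranteeing that a partition of $G$ into $\sum_s x_s$ sets of the prescribed shapes exists. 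The correctness argument---feasibility of the formulation is equivalent to the existence of a partition witnessing $(\phi,r)$---carries over word for word, since it never refers to a vertex cover.

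The only genuinely new bookkeeping is the runtime, and here the point to verify is how the improved shape count propagates through the two dominant stages. The number of shapes is now $(2^{q_S}q_v)^{k}=2^{O(qk)}$; testing each against $\phi$ requires the neighborhood-diversity model-checking subroutine (the same one underlying Corollary~\ref{cor:main}, obtained from Theorem~\ref{thm:lam} by replacing the vertex cover with the type partition), which runs in time $2^{k2^{O(q)}}$ per shape. The p-ILP formulation then has $2^{O(qk)}$ variables, so by Theorem~\ref{thm:pilp} it is solved in time $2^{2^{O(qk)}}$: a formulation with $p$ variables needs $p^{O(p)}$ arithmetic operations, and here $p=2^{O(qk)}$ gives $p^{O(p)}=2^{O(p\log p)}=2^{2^{O(qk)}}$. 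I expect the main---indeed only---step requiring care to be exactly this one: confirming that the type-based model-checking subroutine is invoked in stage two (rather than the literal vertex-cover statement of Theorem~\ref{thm:lam}), and that after the substitution the doubly-exponential p-ILP term $2^{2^{O(qk)}}$ dominates the model-checking cost, so that the total collapses to the claimed $2^{2^{O(qk)}}\cdot|(\phi,r)|+k|V(G)|$.
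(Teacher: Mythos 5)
Your proposal is correct and follows essentially the same route as the paper: the paper's (very terse) argument likewise just substitutes the type bound $k$ for $2^k+k$, observes that the shape count in Claim~\ref{c:p2} drops to $(2^{q_S}q_v)^k$, and lets the p-ILP construction and correctness argument of Theorem~\ref{thm:msop} carry over unchanged. Your additional bookkeeping---computing the type partition in polynomial time via Theorem~5 of \cite{lam12}, using the type-based model checker, and verifying that the $2^{2^{O(qk)}}$ p-ILP term dominates---is exactly the detail the paper leaves implicit.
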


\section{Concluding Notes}
\label{sec:con}
The article provides two new meta-theorems for graphs of bounded vertex cover.
Both considered formalisms can describe problems which are W[1]-hard on graphs of bounded clique-width and even tree-width. 
On the other hand, we provide FPT algorithms for both \cardMSO and MSO partitioning which have an elementary dependence on both the formula and parameter (as opposed to the results of Courcelle et al. for tree-width).

The obtained time complexities are actually fairly close to the lower bounds provided in \cite{lam12} for \MSOi model checking (already $2^{2^{o(k+q)}}\cdot poly(n)$ would violate ETH); this is surprising since the considered formalisms are significantly more powerful than \MSOi. Our methods may also be of independent interest, as they show how to use p-ILP as a powerful tool for solving general model checking problems.

Let us conclude with future work and possible extensions of our results.
As correctly observed by Lampis in \cite{lam12}, any \MSOii formula can be expressed by \MSOi on graphs of bounded vertex cover. This means that an (appropriately defined)
{\ensuremath{\mathrm{cardMSO}_2}\xspace} or MSO$_2$ partitioning formula could be translated to an equivalent \cardMSO or MSO partitioning formula on graphs of bounded vertex cover. However, the details of these formalisms would
need to be laid out in future work.

Another direction would be to extend the results of Theorems \ref{thm:main} and \ref{thm:msop} to more general parameters, such as twin-cover \cite{gan11} or shrub-depth \cite{ganianetal12}. Finally, it would be interesting to extend \cardMSO to capture more hard problems. Theorem \ref{thm:cbal} provides a good indication that the formalism could be adapted to also describe a number of optimization problems on graphs.

\bibliographystyle{abbrv}

\bibliography{gtbibcardMSO}

\end{document}